\theoremstyle{plain}
\newtheorem{theorem}{Theorem}
\newtheorem{lemma}{Lemma}
\begin{document}
\title{Physical-layer Network Coding in Two-Way Heterogeneous Cellular Networks with Power Imbalance}

\author{Ajay Thampi, \IEEEmembership{Student Member, IEEE,} Soung Chang Liew, \IEEEmembership{Fellow, IEEE,} Simon Armour, Zhong Fan, Lizhao You, \IEEEmembership{Student Member, IEEE,} Dritan Kaleshi
}

\maketitle

\begin{abstract}
The growing demand for high-speed data, quality of service (QoS) assurance and energy efficiency has triggered the evolution of 4G LTE-A networks to 5G and beyond. Interference is still a major performance bottleneck. This paper studies the application of physical-layer network coding (PNC), a technique that exploits interference, in heterogeneous cellular networks. In particular, we propose a rate-maximising relay selection algorithm for a single cell with multiple relays based on the decode-and-forward strategy. With nodes transmitting at different powers, the proposed algorithm adapts the resource allocation according to the differing link rates and we prove theoretically that the optimisation problem is log-concave. The proposed technique is shown to perform significantly better than the widely studied selection-cooperation technique. We then undertake an experimental study on a software radio platform of the decoding performance of PNC with unbalanced SNRs in the multiple-access transmissions. This problem is inherent in cellular networks and it is shown that with channel coding and decoders based on multiuser detection and successive interference cancellation, the performance is better with power imbalance. This paper paves the way for further research in multi-cell PNC, resource allocation, and the implementation of PNC with higher-order modulations and advanced coding techniques.
\end{abstract}

\begin{IEEEkeywords}
Physical-layer Network Coding, PNC, Interference, Cooperation, Cellular Networks, LTE-A, WiMAX, CoMP, Heterogeneous Networks, HetNet, Relay Selection, Software Radio, USRP
\end{IEEEkeywords}

\IEEEpeerreviewmaketitle

\section{Introduction}
\label{sec:introduction}
As operators evolve their networks toward the 4$^{\text{th}}$ generation (4G) Long Term Evolution-Advanced (LTE-A), the research community has moved on to the study of technologies to be adopted in the 5$^{\text{th}}$ generation (5G). The evolution to 5G is triggered by the forecast explosion in mobile data traffic, which is expected to grow 7-fold from 2013 to 2017 \cite{Cisco2013}; 66\% of that mobile traffic is expected to be video by 2017 with an increasing number of devices requiring high-speed wireless broadband.

In LTE-A systems, attempts to address these requirements are made by cell size reduction and aggressive frequency reuse. As a result, interference between cell sites is identified as the major performance bottleneck \cite{Akyildiz2010,Gesbert2010} and techniques such as coordinated multipoint (CoMP) transmission and reception and heterogeneous networks (HetNet) have been proposed \cite{Hossein2011}. In a CoMP-based HetNet, the base station and users coordinate their transmissions and receptions with the help of many low-powered nodes such as relays, femtocells, picocells and remote radio heads. Such systems are shown to have improved cell coverage and also spectral efficiency \cite{Peng2011,Bhat2012}. Further performance gains can be achieved by employing physical-layer network coding (PNC).

PNC was first proposed in 2006 \cite{Zhang2006a,Popovski2006} as a way to exploit interference inherent in wireless communication systems. Rather than treating interference as a form of corruption, PNC exploits the natural network coding operation that occurs when the desired and interfering electromagnetic waves superimpose with each other. Compared with the traditional non-network-coded scheme (TS), PNC could achieve a 100\% throughput gain \cite{Liew2013}. Since its inception, PNC has gained a wide following in the research community and has recently been considered as a study item in the 3GPP standards \cite{3GPP2009a,3GPP2009,Osseiran2011}.

\subsection{Related Work}
\label{sec:related_work}
To date, most PNC studies have focused on the two-way relay channel (TWRC) model where all the nodes transmit at equal powers \cite{Liew2013}. Two key issues in PNC, symbol asynchrony and channel coding, were addressed in the time domain in \cite{Lu2012} and in the frequency domain in \cite{Lu2013a}. PNC was also successfully implemented on a software radio platform and insights on throughput gains, symbol misalignment, channel coding, effect of carrier frequency offset and real-time issues were gained through these practical prototyping efforts \cite{Lu2013a,Lu2013b,Wu2014}. 

In cellular networks, where there are multiple relays deployed in the cell, an important problem is to select the optimum relay to assist the end-to-end information exchange between the base station and the user. In \cite{Louie2010} and \cite{Song2010}, relay selection was studied in a PNC system where the amplify-and-forward\footnote{The relay amplifies the received superimposed network coded symbol and forwards it to the end nodes.} (AF) strategy was adopted at the relays. Both relay selection algorithms were based on minimising the overall sum bit-error-rate, and the optimisation problem was simplified by assuming equal time allocation for all the links. An AF-based PNC system is however known to be limited by noise, especially at low signal-to-noise ratios (SNRs) \cite{Katti2007}. For instance, when the SNR is between 5-7.5 dB in a symmetric TWRC, the achievable rate of ANC reduces by 5-22\% when compared to TS \cite{Liew2013}. The performance limitation due to noise could be mitigated by adopting the decode-and-forward\footnote{The relay decodes the superimposed network coded symbol rather than the individual symbols transmitted by the end nodes.} (DF) strategy at the relays \cite{Liew2013}. At low SNRs (5-7.5 dB) in a symmetric TWRC, DF-based PNC still performs better than TS, achieving a rate gain of 20-27\%. The DF strategy is therefore considered in this paper. 

Relay selection in a DF-based PNC system was studied in \cite{Ju2010}. The algorithm, called SC-PNC, is based on the widely applied selection-cooperation technique \cite{Bletsas2007,Beres2008} and consists of two steps. In the first step, the end nodes transmit their symbols in the multiple-access phase and all the relays that are not in outage are added to a list for selection. In the second step, the relay in the list that minimises the broadcast-phase outage probability or maximises the minimum mutual information of the two broadcast links is selected. This algorithm assumes equal time allocation for all the links and a closed-form expression for the outage probability is derived. 

The drawback of the approach in \cite{Ju2010} is that the relay selected to maximise the minimum mutual information of the two broadcast links may not be the optimum one for the multiple-access phase. This sub-optimum selection could affect the overall rate of the PNC system. We have also seen that the relay selection algorithms in the literature are simplified by assuming equal time allocation for all the links. The performance of the system could be further improved by allocating more time for the weaker link. In addition, the problem of power imbalance, which is inherent in a cellular network, has not been studied. Since all the nodes transmit at different powers, the decoding performance at the relay in the multiple-access phase could be impacted. All the above gaps are addressed in this paper.

\subsection{Contributions}
\label{sec:contributions}
We consider a PNC system where the nodes transmit at different powers and the time slots allocated for the links are made inversely proportional to their achievable rates. Our objective is to maximise the overall rate of the PNC system and with imbalanced transmitted powers, this necessitates allocating more time for the node with the weaker link. To the best of our knowledge, such a system has not been studied in the literature and we call it PNC-B. We prove that the optimisation problem is log-concave and propose a gradient-ascent based algorithm for relay selection. The performance of PNC-B, in terms of overall rate and densification gain, is shown to be much better than SC-PNC \cite{Ju2010}.

We then study the decoding performance of the relay in the multiple-access phase, given the power imbalance in the system. An experimental study on a software radio platform is conducted. We show that with link-by-link channel coding, the decoding success rate is better when there is an imbalance in power. In addition, we show that power control to balance the SNRs could be detrimental to the performance, especially at low SNRs.

The rest of the paper is structured as follows. Section \ref{sec:system_model} gives an overview of the system model adopted in this paper. Section \ref{sec:tx_strategies_rates} then studies the transmission strategies and their corresponding information-theoretic rates. In Section \ref{sec:relay_selection_pncb}, we look at the relay selection problem for PNC-B. The proof that the optimisation problem is log-concave and the derivation of the algorithm can be found in Section \ref{sec:pncb_algorithm}. The simulations results comparing the performance of the proposed PNC-B algorithm with SC-PNC \cite{Ju2010} can be found in Section \ref{sec:simulation_results}. Section \ref{sec:pnc_decoding_accuracy} then describes the software radio experimental setup and analyses the decoding performance of PNC-B for various SNRs. Finally, Section \ref{sec:conclusions} concludes the paper and suggests avenues for further research.

\section{System Model}
\label{sec:system_model}
The system consists of a cell served by a single base station with multiple users and relays. The traffic between the base station and the users is bidirectional. We assume that every scheduled user will have a unique relay assisting it. Both the linear and planar network models are considered, as shown in Figures \ref{fig:linear_model} and \ref{fig:planar_model} respectively. 

\begin{figure}[h!]
\centering
\begin{subfigure}{.5\textwidth}
  \centering
  \includegraphics[width=0.85\linewidth]{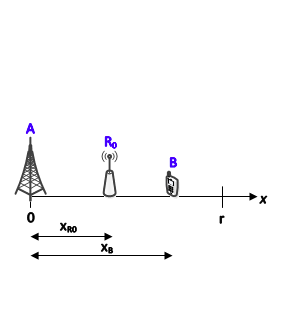}
  \caption{Linear Model}
  \label{fig:linear_model}
\end{subfigure}%
\begin{subfigure}{.5\textwidth}
  \centering
  \includegraphics[width=1.0\linewidth]{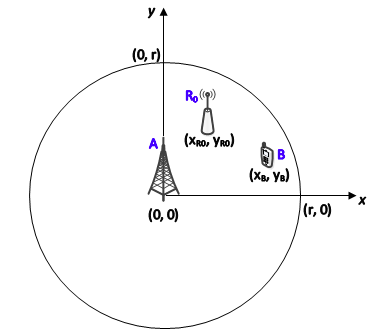}
  \caption{Planar Model}
  \label{fig:planar_model}
\end{subfigure}
\caption{Network Models}
\label{fig:test}
\end{figure}

The figures show a single scheduled user and a unique relay assisting it. In both models, the base station is represented as node $A$, the user as node $B$ and the optimum relay assisting them as node $R_0$. Each node is equipped with a single omnidirectional antenna. The cell radius is denoted by $r$ and the base station is placed at the origin in both models. In the linear model, the relay and user are at distances $x_{R_0}$ and $x_B$ respectively from the base station. In the planar model, the locations of the relay and user are Cartesian coordinates, $(x_{R_0},y_{R_0})$ and $(x_B,y_B)$ respectively.

In general, the received power at node $y$ when node $x$ transmits at power $P_x^{(t)}$ is given by
\begin{equation}
P_{xy}^{(r)} = \bar{P}_x |h_{xy}|^2 d_{xy}^{-n}
\label{eqn:p_rx_xy}
\end{equation}
where $n$ is the path loss exponent, and $|h_{xy}|$ and $d_{xy}$ are the normalised gain of the channel and the distance between nodes $x$ and $y$, respectively. In (\ref{eqn:p_rx_xy}), $\bar{P}_x$ is the received power from node $x$ accounting for the free space path loss, given by
\begin{equation}
\bar{P}_x = \left( \frac{c}{4 \pi f_c} \right)^2 d_0^{n-2} P_x^{(t)}
\label{eqn:p_bar_x}
\end{equation}
where $c$ is the speed of light in vacuum, $f_c$ is the carrier frequency, $d_0$ is the reference distance and $P_x^{(t)}$ is the transmitted power of node $x$. 

In cellular networks, it is fair to assume the following constraint on the transmitted powers.
\begin{equation}
P_A^{(t)} > P_{R_0}^{(t)} > P_B^{(t)}
\label{eqn:txd_power_constraint}
\end{equation}

This form of power imbalance is considered in this paper. Without loss of generality, time-division duplexing is assumed and all nodes in the network respect the half-duplex constraint since full-duplex wireless is presently very challenging to implement.

\section{Transmission Strategies and Rates}
\label{sec:tx_strategies_rates}
The PNC scheme is shown in Figure \ref{fig:pnc_strategy}. In the first time slot, called the multiple-access phase, the base station and user (nodes A and B respectively) transmit simultaneously. The relay tries to deduce a network coded message from the superimposed signals of A and B in the multiple-access phase. This process is called PNC mapping and is described in great detail in \cite{Liew2013}. In the second time slot, called the broadcast phase, the relay broadcasts the deduced network-coded message to the base station and the user. Using the self-information, each end node can extract the signal transmitted by the other.

\begin{figure}[h!]\centering\includegraphics[width=0.4\textwidth]{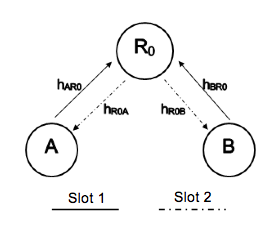}\caption{Physical-Layer Network Coding}\label{fig:pnc_strategy}\end{figure}

The rate of the multiple-access phase is upper-bounded by (\ref{eqn:r_ma}). If link-by-link channel coding is done in the PNC system, where the relay performs channel decoding and re-encoding in addition to PNC mapping \cite{Liew2013}, then it is shown in \cite{Nam2010} that the upper bound is approached within 1/2 bit using nested lattice codes. The rate of the multiple-access phase using lattice codes is given by (\ref{eqn:r_ma_lc}).

\begin{equation}
\begin{aligned}
R_{MA} &= \min \left\{ \log_2 \left( 1 + \frac{P_{AR_0}^{(r)}}{N_0 W} \right) ,\right. \\ &\ \ \ \ \ \ \ \ \ \ \ \left. \log_2 \left( 1 + \frac{P_{BR_0}^{(r)}}{N_0 W} \right)  \right\} \text{bps/Hz}
\end{aligned}
\label{eqn:r_ma}
\end{equation}

\begin{equation}
\begin{aligned}
R_{MA}^{LC} &= \min \left\{ \log_2 \left( \frac{P_{AR_0}^{(r)}}{P_{AR_0}^{(r)} + P_{BR_0}^{(r)}} + \frac{P_{AR_0}^{(r)}}{N_0 W} \right) , \right. \\ &\ \ \ \ \ \ \ \ \ \ \left. \log_2 \left( \frac{P_{BR_0}^{(r)}}{P_{AR_0}^{(r)} + P_{BR_0}^{(r)}} + \frac{P_{BR_0}^{(r)}}{N_0 W} \right)  \right\} \text{bps/Hz}
\end{aligned}
\label{eqn:r_ma_lc}
\end{equation}

For the broadcast phase, the rate is given by (\ref{eqn:r_bc}).
\begin{equation}
\begin{aligned}
R_{BC} &= \min \left\{ \log_2 \left( 1 + \frac{P_{R_0A}^{(r)}}{N_0 W} \right) , \right. \\ &\ \ \ \ \ \ \ \ \ \ \ \left. \log_2 \left( 1 + \frac{P_{R_0B}^{(r)}}{N_0 W} \right)  \right\} \text{bps/Hz}
\end{aligned}
\label{eqn:r_bc}
\end{equation}

The overall achievable rate of the PNC system with equal time-slot allocation, as assumed in the literature \cite{Louie2010,Song2010,Ju2010}, is given by (\ref{eqn:r_pnc}). 
\begin{equation}
R_{PNC} = \frac{1}{2} \min \left\{ R_{MA}^{LC} , R_{BC} \right\} \text{bps/Hz}
\label{eqn:r_pnc}
\end{equation}
For the PNC system considered in this paper with rate-maximizing unbalanced time allocation, the overall achievable rate is given by (\ref{eqn:r_pncb2}).
\begin{equation}
R_{PNC-B} = \min \left\{ \rho_{MA} R_{MA}^{LC} , \rho_{BC} R_{BC} \right\} \text{bps/Hz}
\label{eqn:r_pncb2}
\end{equation}
where $\rho_{MA} = \frac{R_{BC}}{R_{MA}^{LC} + R_{BC}}$ and $\rho_{BC} = \frac{R_{MA}^{LC}}{R_{MA}^{LC} + R_{BC}}$ are the fractions of time allocated for the multiple-access and broadcast phases respectively. Thus,
\begin{equation}
R_{PNC-B} = \frac{R_{MA}^{LC} R_{BC}}{R_{MA}^{LC} + R_{BC}} \text{bps/Hz}
\label{eqn:r_pncb}
\end{equation}

\section{Relay Selection for PNC-B}
\label{sec:relay_selection_pncb}
\subsection{The Algorithm}
\label{sec:pncb_algorithm}
This sub-section addresses the problem of relay selection for the PNC-B transmission strategy. To the best of our knowledge, this has not been studied in the literature. The SC-PNC approach in \cite{Ju2010} cannot be easily extended to PNC-B as the derivation of the outage probability becomes mathematically intractable.

The optimum relay is the one that maximises the overall rate of the PNC-B system given by (\ref{eqn:r_pncb_2}). 
\begin{equation}
\tilde{R}_{PNC-B} = \frac{R_{MA} R_{BC}}{R_{MA} + R_{BC}} \text{bps/Hz}
\label{eqn:r_pncb_2}
\end{equation}
In (\ref{eqn:r_pncb_2}), the upper bound rate for the multiple-access phase, given by (\ref{eqn:r_ma}), is used for analytical simplicity. If the rate using nested lattice codes, given by (\ref{eqn:r_pncb}) is used, the optimisation problem becomes mathematically intractable. It will however be shown in Section \ref{sec:simulation_results} that the optimum relay derived using the upper-bound approximation in turn maximises the achievable rate using nested lattice codes.

$R_{MA}$ and $R_{BC}$, given by (\ref{eqn:r_ma}) and (\ref{eqn:r_bc}) respectively, can be rewritten as $R_{MA} = \min \left\{ R_{AR_0} , R_{BR_0} \right\}$ and $R_{BC} = \min \left\{ R_{R_0A} , R_{R_0B} \right\}$. In order to keep the equations simple, we first consider the linear model.

\begin{lemma}
The overall rate of the PNC-B system, dependent on the user and relay locations, consists of four cases given by,
\begin{equation}
\begin{aligned}
\tilde{R}_{PNC-B} = \left\{ 
  \begin{array}{l l}
    \frac{R_{AR_0} R_{R_0A}}{R_{AR_0} + R_{R_0A}} & \quad \text{if } x_{R_0} \in \left(\frac{x_B}{D},x_B\right] \\
    \frac{R_{AR_0} R_{R_0B}}{R_{AR_0} + R_{R_0B}} & \quad \text{if } x_{R_0} \in \left(\frac{x_B}{D},\frac{x_B}{2}\right] \\
    \frac{R_{BR_0} R_{R_0A}}{R_{BR_0} + R_{R_0A}} & \quad \text{if } x_{R_0} \in \left(\frac{x_B}{2},\frac{x_B}{D}\right] \\
    \frac{R_{BR_0} R_{R_0B}}{R_{BR_0} + R_{R_0B}} & \quad \text{if } x_{R_0} \in \left[ d_0,\frac{x_B}{2} \right] 
  \end{array} \right.
\end{aligned}
\label{eqn:r_pncb_cases}
\end{equation}
where $D = 1 + \left( P_B^{(t)} / P_A^{(t)}\right)^{1 / n}$.
\label{lem:pnc_b_ranges}
\end{lemma}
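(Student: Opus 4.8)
The plan is to determine, for each candidate relay position $x_{R_0}$, which of the four links $A\!-\!R_0$, $B\!-\!R_0$, $R_0\!-\!A$, $R_0\!-\!B$ attains the minimum in $R_{MA}=\min\{R_{AR_0},R_{BR_0}\}$ and in $R_{BC}=\min\{R_{R_0A},R_{R_0B}\}$, and then substitute the active pair into $\tilde R_{PNC-B}=\frac{R_{MA}R_{BC}}{R_{MA}+R_{BC}}$. Since $\tilde R_{PNC-B}$ is increasing in each of $R_{MA}$ and $R_{BC}$ separately, it suffices to identify the minimising link in each phase as a function of $x_{R_0}$; the four combinations of (which link is smaller in the MA phase) $\times$ (which link is smaller in the BC phase) give the four cases.

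First I would handle the broadcast phase. Using (\ref{eqn:p_rx_xy})--(\ref{eqn:p_bar_x}), the received powers $P^{(r)}_{R_0A}$ and $P^{(r)}_{R_0B}$ depend on $R_0$'s transmit power through the common factor $\bar P_{R_0}$, so $R_{R_0A}\le R_{R_0B}$ reduces to comparing $d_{R_0A}^{-n}$ with $d_{R_0B}^{-n}$, i.e. comparing the distances $d_{R_0A}=x_{R_0}$ and $d_{R_0B}=x_B-x_{R_0}$ (assuming $x_{R_0}\le x_B$, as the relay lies between base station and user in the linear model, with fading gains normalised so the geometry dominates). This gives the clean threshold $x_{R_0}\lessgtr x_B/2$: for $x_{R_0}\le x_B/2$ the relay is closer to $A$, so the $R_0\!-\!B$ link is the weaker one and $R_{BC}=R_{R_0B}$; for $x_{R_0}>x_B/2$, $R_{BC}=R_{R_0A}$. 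This explains why $x_B/2$ is the dividing point between the first two cases and the last two.

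Next I would handle the multiple-access phase, where the asymmetry in transmit powers enters. Here $R_{AR_0}\le R_{BR_0}$ amounts to $P^{(r)}_{AR_0}\le P^{(r)}_{BR_0}$, i.e. $\bar P_A\, d_{AR_0}^{-n}\le \bar P_B\, d_{BR_0}^{-n}$, which (with $\bar P_x\propto P_x^{(t)}$) rearranges to $\frac{x_{R_0}}{x_B-x_{R_0}}\ge \big(P_A^{(t)}/P_B^{(t)}\big)^{1/n}$. Solving for $x_{R_0}$ yields the threshold $x_{R_0}= x_B\big/\!\left(1+(P_B^{(t)}/P_A^{(t)})^{1/n}\right)=x_B/D$, with $D$ as defined in the statement. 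Because $P_A^{(t)}>P_B^{(t)}$ by (\ref{eqn:txd_power_constraint}) we have $1<D<2$, so $x_B/D$ lies strictly between $x_B/2$ and $x_B$; for $x_{R_0}>x_B/D$ the $A\!-\!R_0$ link is weaker (larger received power from $B$ due to proximity can still lose to $A$'s power advantage only when $R_0$ is close enough to $A$), giving $R_{MA}=R_{AR_0}$, and for $x_{R_0}\le x_B/D$, $R_{MA}=R_{BR_0}$. Intersecting the two phase-thresholds $\{x_B/2,\ x_B/D\}$ partitions the admissible interval $[d_0,x_B]$ into the four subintervals listed, and reading off the active link pair in each piece gives (\ref{eqn:r_pncb_cases}).

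The main obstacle, and the place where care is needed, is the bookkeeping of the distance expressions and the direction of each inequality: one must be careful that in the linear model the relay is constrained to lie between $A$ and $B$ (so that $d_{R_0A}=x_{R_0}$, $d_{R_0B}=x_B-x_{R_0}$, $d_{AR_0}=x_{R_0}$, $d_{BR_0}=x_B-x_{R_0}$ all hold with the right signs), and that the normalised channel gains $|h_{xy}|$ are treated as order-one so that the ordering of rates is governed by the path-loss/power terms — otherwise the thresholds would be random rather than deterministic functions of position. A secondary point is to confirm that the lower endpoint $d_0$ and the ordering $x_B/2<x_B/D<x_B$ make the four intervals a genuine partition with no gaps or overlaps, which follows immediately from $1<D<2$. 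Everything else is routine algebraic rearrangement of (\ref{eqn:r_ma}) and (\ref{eqn:r_bc}).
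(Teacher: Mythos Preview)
Your argument is correct and follows essentially the same route as the paper: compare received powers in each phase (ignoring fading so that geometry and transmit power determine the ordering), obtain the thresholds $x_B/2$ from the broadcast phase and $x_B/D$ from the multiple-access phase, and read off the four combinations. The paper proceeds in the opposite order (MA first, then BC) and states the fading-neglect assumption explicitly as a modelling choice, but the substance is identical.

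One small wording issue: you say the two thresholds ``partition $[d_0,x_B]$ into the four subintervals listed'' with ``no gaps or overlaps.'' With $1<D<2$ you only get three nonempty pieces; Case~2, with range $\bigl(x_B/D,\,x_B/2\bigr]$, is empty. The lemma lists all four \emph{logical} combinations of the two minima, and the paper defers the observation that Case~2 is vacuous to the next lemma. Your derivation already contains this (you note $x_B/2<x_B/D$), so just avoid calling it a four-piece partition.
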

\begin{proof}
From (\ref{eqn:r_ma}), $R_{MA} = R_{AR_0}$ when $R_{AR_0} < R_{BR_0}$ and $R_{MA} = R_{BR_0}$ otherwise. When $R_{AR_0} < R_{BR_0}$,
\begin{equation}
\begin{aligned}
&\log_2 \left( 1 + \frac{P_{AR_0}^{(r)}}{N_0 W} \right) < \log_2 \left( 1 + \frac{P_{R_0B}^{(r)}}{N_0 W} \right) \\
&\log_2 \left( 1 + \frac{\bar{P}_A |h_{AR_0}|^2 x_{R_0}^{-n}}{N_0 W} \right) < \\&\log_2 \left( 1 + \frac{\bar{P}_B |h_{BR_0}|^2 \left|x_B - x_{R_0}\right|^{-n}}{N_0 W} \right)
\end{aligned}
\label{eqn:r_ar0_less_br0}
\end{equation}
Ignoring the effects of fading and considering that the optimum relay has to lie between the base station and the user, (\ref{eqn:r_ar0_less_br0}) reduces to
\begin{equation}
P_A^{(t)} x_{R_0}^{-n} < P_B^{(t)} (x_B - x_{R_0})^{-n}
\label{eqn:r_ar0_less_br0_2}
\end{equation}

The effects of fading are ignored to decouple the problem of relay selection from resource allocation. In an OFDM system, the effects of wideband (frequency-selective) fading are mitigated by dividing the signal into many narrowband subcarriers. Maximising performance in such a flat fading environment is then a resource allocation issue. In \cite{Xu2013}, the problem of subcarrier allocation in an OFDMA-based heterogeneous system that employed straightforward network coding was studied. For a PNC system, this problem would alter slightly since the same subcarriers would need to be used by the base station and the user in the multiple-access phase. This is however beyond the scope of this paper and will be addressed in the future.

Solving (\ref{eqn:r_ar0_less_br0_2}) for $x_{R_0}$, we can obtain the range of values for which $R_{MA} = R_{AR_0}$ which are, $x_{R_0} \in \left(\frac{x_B}{D},x_B\right)$ where $D = 1 + \left( P_B^{(t)} / P_A^{(t)}\right)^{1 / n}$. On the other hand, $R_{MA} = R_{BR_0}$ when $x_{R_0} \in \left[d_0,\frac{x_B}{D}\right]$.

Similarly, from (\ref{eqn:r_bc}), $R_{BC} = R_{R_0A}$ when $R_{R_0A} < R_{R_0B}$ and $R_{BC} = R_{R_0B}$ otherwise. Thus from $R_{R_0A} < R_{R_0B}$, we get
\begin{equation}
P_{R_0}^{(t)} x_{R_0}^{-n} < P_{R_0}^{(t)} (x_B - x_{R_0})^{-n}
\label{eqn:r_r0a_less_r0b_2}
\end{equation}
Equation (\ref{eqn:r_r0a_less_r0b_2}) is obtained similar to (\ref{eqn:r_ar0_less_br0_2}). Solving (\ref{eqn:r_r0a_less_r0b_2}) for $x_{R_0}$, we get $R_{BC} = R_{R_0A}$ when $x_{R_0} \in \left(\frac{x_B}{2},x_B\right)$ and $R_{BC} = R_{R_0B}$ when $x_{R_0} \in \left[d_0,\frac{x_B}{2}\right]$. Combining these results for $R_{MA}$ and $R_{BC}$, we can obtain (\ref{eqn:r_pncb_cases}).
\end{proof}

\begin{lemma}
The search for the optimum relay can be restricted to the range $\left(\frac{x_B}{2},\frac{x_B}{D}\right]$.
\label{lem:optimum_relay_range}
\end{lemma}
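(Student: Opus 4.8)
The plan is to regard $\tilde{R}_{PNC-B}$ as a function of the relay position $x_{R_0}$ over the feasible range $[d_0,x_B]$ and to show it is strictly increasing on the sub-interval nearest the base station and strictly decreasing on the sub-interval nearest the user, so that any maximiser must fall in the remaining middle sub-interval. First I would pin down the geometry: the power constraint (\ref{eqn:txd_power_constraint}) gives $P_B^{(t)}/P_A^{(t)}\in(0,1)$ and hence $1<D<2$, so the thresholds in Lemma~\ref{lem:pnc_b_ranges} order as $d_0<\frac{x_B}{2}<\frac{x_B}{D}<x_B$. Thus Lemma~\ref{lem:pnc_b_ranges} splits $[d_0,x_B]$ into $[d_0,\frac{x_B}{2}]$, on which $\tilde{R}_{PNC-B}=\frac{R_{BR_0}R_{R_0B}}{R_{BR_0}+R_{R_0B}}$; the middle piece $(\frac{x_B}{2},\frac{x_B}{D}]$, on which $\tilde{R}_{PNC-B}=\frac{R_{BR_0}R_{R_0A}}{R_{BR_0}+R_{R_0A}}$; and $(\frac{x_B}{D},x_B]$, on which $\tilde{R}_{PNC-B}=\frac{R_{AR_0}R_{R_0A}}{R_{AR_0}+R_{R_0A}}$ (the second case listed in Lemma~\ref{lem:pnc_b_ranges} describes an empty range once $D<2$).

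Next I would record two monotonicity facts. With fading ignored, as already done in Lemma~\ref{lem:pnc_b_ranges}, every single-link rate has the form $\log_2(1+k\,d^{-n})$ for a positive constant $k$ and link distance $d$, which is strictly decreasing in $d$; hence $R_{AR_0}$ and $R_{R_0A}$, whose distance is $x_{R_0}$, are strictly decreasing in $x_{R_0}$, whereas $R_{BR_0}$ and $R_{R_0B}$, whose distance is $x_B-x_{R_0}$, are strictly increasing in $x_{R_0}$. Also the combining map $g(a,b)=\frac{ab}{a+b}$ has $\partial g/\partial a=b^2/(a+b)^2>0$ and $\partial g/\partial b=a^2/(a+b)^2>0$ on the positive orthant, so it is strictly increasing in each argument. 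Consequently, on $[d_0,\frac{x_B}{2}]$ the rate $\tilde{R}_{PNC-B}=g(R_{BR_0},R_{R_0B})$ is the composition of the coordinatewise-increasing map $g$ with two functions of $x_{R_0}$ that are both increasing, hence strictly increasing; and on $[\frac{x_B}{D},x_B]$ the rate $\tilde{R}_{PNC-B}=g(R_{AR_0},R_{R_0A})$ is strictly decreasing by the same reasoning.

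I would then check that $\tilde{R}_{PNC-B}$ is continuous across the two thresholds, which is exactly why those thresholds arise: at $x_{R_0}=\frac{x_B}{2}$ the relay is equidistant from the two end nodes, so $R_{R_0A}=R_{R_0B}$ and the two candidate expressions for $R_{BC}$ agree; at $x_{R_0}=\frac{x_B}{D}$ equation~(\ref{eqn:r_ar0_less_br0_2}) holds with equality, so $R_{AR_0}=R_{BR_0}$ and the two candidate expressions for $R_{MA}$ agree. A continuous function that is strictly increasing on $[d_0,\frac{x_B}{2}]$ and strictly decreasing on $[\frac{x_B}{D},x_B]$ cannot attain its maximum over $[d_0,x_B]$ at any point of $[d_0,\frac{x_B}{2})$, nor at any point of $(\frac{x_B}{D},x_B]$; hence every maximiser lies in $[\frac{x_B}{2},\frac{x_B}{D}]$. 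Since the rate is strictly increasing as $x_{R_0}\uparrow\frac{x_B}{2}$, the single endpoint $\frac{x_B}{2}$ may be discarded, which leaves the search range $(\frac{x_B}{2},\frac{x_B}{D}]$ of the statement; this is the regime in which the multiple-access bottleneck is the uplink from the weaker user $B$ while the broadcast bottleneck is the downlink to the base station $A$.

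The two monotonicity computations are routine. The step needing care is splicing the three pieces together at $\frac{x_B}{2}$ and $\frac{x_B}{D}$: one must verify continuity there, so that the increasing-then-decreasing shape genuinely traps the maximiser in the middle region instead of allowing a discontinuity at a threshold to push the optimum into an outer piece. Tidying up the half-open left endpoint is the only further loose end.
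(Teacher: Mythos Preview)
Your argument is correct and follows the same overall shape as the paper's: first use $P_A^{(t)}>P_B^{(t)}$ to get $1<D<2$ and thereby discard Case~2 as empty, then eliminate the two outer sub-intervals (Cases~1 and~4) so that only Case~3 survives. The difference is one of rigour. The paper disposes of Cases~1 and~4 with a qualitative remark that the optimisation would be ``skewed'' towards the user or the base station; you instead give an explicit monotonicity argument, noting that on Case~4 both factors $R_{BR_0},R_{R_0B}$ increase with $x_{R_0}$ and on Case~1 both factors $R_{AR_0},R_{R_0A}$ decrease, and that the harmonic-type combiner $ab/(a+b)$ is increasing in each argument. Together with the continuity check at the two thresholds this actually proves what the paper only asserts. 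What the paper's terse version buys is brevity; what yours buys is that the reader can verify the claim without appealing to intuition about ``skew.'' The one point you flag as a loose end---excluding the left endpoint $x_B/2$---is indeed not fully nailed down by the increasing-on-the-left argument alone (it rules out points strictly left of $x_B/2$ but not $x_B/2$ itself), and the paper does not address it either; it is harmless in practice since the subsequent log-concavity result in Theorem~\ref{thm:f_pncb_logconcave} locates the maximiser in the interior.
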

\begin{proof}
To prove this lemma, the validity of the four cases in Lemma \ref{lem:pnc_b_ranges} have to be analysed. Given the power constraint (\ref{eqn:txd_power_constraint}) where $P_A^{(t)} > P_B^{(t)}$, the constant $D$ is less than 2. Hence, $\frac{x_B}{D} > \frac{x_B}{2}$ which makes Case 2, where $x_{R_0} \in \left(\frac{x_B}{D},\frac{x_B}{2}\right]$, invalid. For Case 1 where $x_{R_0} \in \left(\frac{x_B}{D},x_B\right]$, the optimisation problem will be skewed towards the user since a relay closer to the user will be chosen. Similarly, Case 4 will be skewed towards the base station since a relay in the range $\left[ d_0,\frac{x_B}{2} \right]$ will be chosen. Thus, the search space for the optimum relay must be the one in Case 3 where $x_{R_0} \in \left(\frac{x_B}{2},\frac{x_B}{D}\right]$.
\end{proof}

Using Lemmas \ref{lem:pnc_b_ranges} and \ref{lem:optimum_relay_range}, (\ref{eqn:r_pncb_2}) simplifies to
\begin{equation}
\tilde{R}_{PNC-B} = \frac{R_{BR_0} R_{R_0A}}{R_{BR_0} + R_{R_0A}}\ ;\ x_{R_0} \in \left(\frac{x_B}{2},\frac{x_B}{D}\right]
\label{eqn:r_pncb_3}
\end{equation}
The objective function for PNC-B is then
\begin{equation}
f(x_{R_0}) = \left(\frac{1}{\log_e 2}\right) . \frac{g(x_{R_0}) h(x_{R_0})}{g(x_{R_0}) + h(x_{R_0})}
\label{eqn:obj_function_pncb}
\end{equation}
In (\ref{eqn:obj_function_pncb}), 
\begin{equation}
g(x_{R_0}) = \log_e \left(1 + \Gamma_B (x_B - x_{R_0})^{-n}\right)
\label{eqn:g_2}
\end{equation}
\begin{equation}
h(x_{R_0}) = \log_e \left(1 + \Gamma_{R_0} x_{R_0}^{-n} \right)
\label{eqn:h_2}
\end{equation}
where $\Gamma_B = \frac{\bar{P}_B |h_{BR_0}|^2}{N_0 W}$ and $\Gamma_{R_0} = \frac{\bar{P}_{R_0}|h_{R_0A}|^2}{N_0 W}$. The relay selection problem for the linear model can be formulated as
\begin{equation}
\begin{aligned}
& \underset{x_{R_0}}{\text{maximise}}
& & f(x_{R_0}) \\
& \text{subject to}
& & \frac{x_B}{2} < x_{R_0} \le \frac{x_B}{D} \\
&&& \left|x_B - x_{R_0}\right| \ge d_0 \\
\end{aligned}
\label{eqn:pncb_opt_problem_linear}
\end{equation}
The following lemmas and theorem will help design the algorithm to solve (\ref{eqn:pncb_opt_problem_linear}).
\begin{lemma}
If a function $f$ on $\mathbb{R}$ is twice differentiable, then it is log-concave if and only if $\textbf{dom} f$ is a convex set and $f^{\prime \prime}(x) f(x) \le f^{\prime}(x)^2, \forall x \in \textbf{dom} f$ \cite{Boyd2004}.
\label{lem:log_convex_concave}
\end{lemma}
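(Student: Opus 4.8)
The plan is to reduce the claim to the standard second-order characterisation of concavity for a twice-differentiable function of one real variable. Recall that, by definition, $f$ is log-concave precisely when $\textbf{dom} f$ is convex, $f(x) > 0$ on $\textbf{dom} f$, and the function $g := \log_e f$ is concave on $\textbf{dom} f$. Since $f$ is twice differentiable and strictly positive on its domain, $g = \log_e f$ is itself twice differentiable there, so I may invoke the classical fact \cite{Boyd2004} that a twice-differentiable function $g$ on a convex subset of $\mathbb{R}$ is concave if and only if $g^{\prime \prime}(x) \le 0$ for all $x \in \textbf{dom} g$.

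The remaining step is a direct computation of $g^{\prime \prime}$. Differentiating once gives $g^{\prime}(x) = f^{\prime}(x)/f(x)$, and differentiating again by the quotient rule gives
\[
g^{\prime \prime}(x) = \frac{f^{\prime \prime}(x) f(x) - f^{\prime}(x)^2}{f(x)^2}.
\]
Because $f(x)^2 > 0$ on $\textbf{dom} f$, the sign of $g^{\prime \prime}(x)$ coincides with the sign of the numerator, so $g^{\prime \prime}(x) \le 0$ holds for every $x$ if and only if $f^{\prime \prime}(x) f(x) - f^{\prime}(x)^2 \le 0$ for every $x$, that is, $f^{\prime \prime}(x) f(x) \le f^{\prime}(x)^2$ for all $x \in \textbf{dom} f$. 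Chaining the two equivalences then proves both directions: $f$ is log-concave $\iff$ $\textbf{dom} f$ is convex and $g = \log_e f$ is concave $\iff$ $\textbf{dom} f$ is convex and $f^{\prime \prime}(x) f(x) \le f^{\prime}(x)^2$ on $\textbf{dom} f$.

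Since this is a textbook result \cite{Boyd2004}, I do not anticipate any genuine obstacle. The only points deserving a word of care are that log-concavity already carries the convex-domain (and strict-positivity) requirement in its definition, so that half of the stated equivalence is essentially definitional rather than something to be derived, and that twice-differentiability transfers from $f$ to $\log_e f$ precisely because $f$ is positive — which is exactly what legitimises applying the second-order concavity test to $g$ in the first place.
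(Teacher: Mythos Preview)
Your proposal is correct and is exactly the standard argument: pass to $g=\log_e f$, use the second-order test for concavity, and read off the numerator of $g''$. The paper itself gives no proof at all for this lemma; it is simply quoted from \cite{Boyd2004} and then applied in Theorem~\ref{thm:f_pncb_logconcave}, so your write-up in fact supplies more detail than the paper does.
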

\begin{lemma}
Log-convexity and log-concavity are closed under multiplication and positive scaling \cite{Boyd2004}.
\label{lem:log_convex_concave_preserve}
\end{lemma}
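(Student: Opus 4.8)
The plan is to reduce both assertions to elementary properties of ordinary convex and concave functions by passing to logarithms. Recall the definition we will use: a positive function $f$ is \emph{log-concave} exactly when $\textbf{dom}\,f$ is convex and $\log f$ is concave on it, and \emph{log-convex} exactly when $\textbf{dom}\,f$ is convex and $\log f$ is convex on it. With this characterisation in hand the whole proof is a matter of manipulating $\log$.

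First I would treat closure under multiplication. Let $f$ and $g$ be log-concave. Their product is defined and positive on $\textbf{dom}\,f \cap \textbf{dom}\,g$, which is convex as an intersection of convex sets, and there $\log(fg) = \log f + \log g$. A sum of concave functions is concave, so $\log(fg)$ is concave and hence $fg$ is log-concave. Replacing "concave" by "convex" throughout, and using that a sum of convex functions is convex, gives the corresponding statement for log-convex functions.

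Next I would treat closure under positive scaling. Fix $\alpha > 0$ and a log-concave $f$. Then $\alpha f$ is positive on the convex set $\textbf{dom}\,f$ and $\log(\alpha f) = \log\alpha + \log f$, which is a constant plus a concave function and therefore concave; so $\alpha f$ is log-concave, and the log-convex case is identical. An alternative route, via the second-order test in Lemma~\ref{lem:log_convex_concave}, is available when the factors are twice differentiable, but it is strictly more cumbersome (it forces extra smoothness hypotheses and a fiddly inequality on $f'' f$ versus $(f')^2$), so I would not take it.

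The only point that needs any attention — and it is bookkeeping rather than a real difficulty — is checking that the effective domain of the product is again a convex set, so that "concavity on the domain" is even meaningful; this is precisely the intersection-of-convex-sets remark above, together with the convention that excludes points where a factor vanishes. I therefore expect no genuine obstacle: the lemma follows immediately from the additivity of $\log$ under products, the additivity of $\log$ under positive scaling, and the stability of concavity and convexity under addition and under adding constants.
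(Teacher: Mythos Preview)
Your argument is correct and is precisely the standard one found in \cite{Boyd2004}. Note, however, that the paper does not supply its own proof of this lemma: it merely states the result and cites Boyd and Vandenberghe, so there is nothing in the paper to compare against beyond the citation itself. Your write-up therefore goes further than the paper does, and does so soundly.
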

\begin{theorem}
The objective function for PNC-B is log-concave for $x_{R_0} \in \left(\frac{x_B}{2},\frac{x_B}{D}\right]$.
\label{thm:f_pncb_logconcave}
\end{theorem}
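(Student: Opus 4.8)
The plan is to invoke the second–derivative criterion of Lemma~\ref{lem:log_convex_concave} for $f$ — which is smooth on the convex interval $\left(\frac{x_B}{2},\frac{x_B}{D}\right]$ since $x_{R_0}$ and $x_B-x_{R_0}$ stay positive and bounded away from the singularities there — after first reformulating. Since $f=\frac{1}{\log_e 2}\cdot\frac{gh}{g+h}$, its reciprocal is a positive multiple of $\frac1g+\frac1h$, so by Lemma~\ref{lem:log_convex_concave_preserve} proving $f$ log-concave is equivalent to proving $\frac1g+\frac1h$ log-convex. The tempting shortcut — show $g$ and $h$ are each log-concave, so that $\frac1g,\frac1h$ are log-convex, then use closure of log-convexity under addition — does \emph{not} go through: $g(x_{R_0})=\log_e\!\left(1+\Gamma_B(x_B-x_{R_0})^{-n}\right)$ and $h(x_{R_0})=\log_e\!\left(1+\Gamma_{R_0}x_{R_0}^{-n}\right)$ are compositions of the log-concave map $u\mapsto\log_e(1+u)$ with the convex maps $x_{R_0}\mapsto\Gamma_B(x_B-x_{R_0})^{-n}$ and $x_{R_0}\mapsto\Gamma_{R_0}x_{R_0}^{-n}$, and such a composition is in general not log-concave, so $g$ and $h$ must be handled together.

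Concretely, I would set $P=gh$, $S=g+h$, differentiate $f=P/(S\log_e 2)$ twice, and feed the result into the inequality $f''f\le (f')^2$ of Lemma~\ref{lem:log_convex_concave}. After clearing the factor $S^4/(\log_e 2)^2$ and cancelling, the criterion collapses to
\begin{equation}
(g+h)\bigl(h^3A_g+g^3A_h\bigr)+gh\,(gh'-hg')^2\;\ge\;0,
\label{eqn:logconc_criterion}
\end{equation}
where $A_g=(g')^2-gg''$ and $A_h=(h')^2-hh''$ are the log-concavity discriminants of $g$ and $h$ (so $g$ is log-concave iff $A_g\ge0$, and likewise for $h$). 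The term $gh(gh'-hg')^2$ is manifestly $\ge0$ since $g,h>0$; hence it suffices to control $h^3A_g+g^3A_h$, and it would be enough to have $A_g,A_h\ge0$ — but, by the remark above, that can fail, so in those regimes the nonnegative cross-term genuinely has to absorb the deficit.

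The main obstacle is therefore to establish (\ref{eqn:logconc_criterion}) over the whole admissible interval. For this I would compute $A_g,A_h$ explicitly: with $s=\Gamma_B(x_B-x_{R_0})^{-n}$ and $t=\Gamma_{R_0}x_{R_0}^{-n}$, one finds that the sign of $A_h$ is that of $nt-(n+1+t)\log_e(1+t)$ and, symmetrically, the sign of $A_g$ is that of $ns-(n+1+s)\log_e(1+s)$, while $g$, $h$ and $gh'-hg'$ likewise reduce to elementary expressions in $s,t,x_{R_0},x_B$. The structural ingredient is Lemma~\ref{lem:optimum_relay_range}, which confines $x_{R_0}$ to $\left(\frac{x_B}{2},\frac{x_B}{D}\right]$ with $D=1+(P_B^{(t)}/P_A^{(t)})^{1/n}\in(1,2)$, so that $x_B-x_{R_0}\in\left[x_B\frac{D-1}{D},\frac{x_B}{2}\right)$ and the ratio $x_{R_0}/(x_B-x_{R_0})$ stays in $\left(1,(P_A^{(t)}/P_B^{(t)})^{1/n}\right]$; this bounded-ratio fact, together with the power constraint (\ref{eqn:txd_power_constraint}), lets one relate $s,t,g,h,A_g,A_h$ and the cross-term to one another and bound the left side of (\ref{eqn:logconc_criterion}) from below. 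Carrying out this last estimate — showing that whenever $h^3A_g+g^3A_h<0$ its magnitude is dominated by $(g+h)^{-1}gh(gh'-hg')^2$ — is the crux of the argument; everything else is a routine bookkeeping with Lemmas~\ref{lem:log_convex_concave} and~\ref{lem:log_convex_concave_preserve}.
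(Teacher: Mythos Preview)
Your route diverges from the paper's at the outset. The paper does not attempt a direct second-derivative bound on $f$; instead it imposes a high-SNR simplification, assuming $\Gamma_B(x_B-x_{R_0})^{-n}\gg 1$ and $\Gamma_{R_0}x_{R_0}^{-n}\gg 1$, which collapses the derivatives to $g'=n/(x_B-x_{R_0})$, $g''=-n/(x_B-x_{R_0})^2$, $h'=-n/x_{R_0}$, $h''=n/x_{R_0}^2$. Under that assumption the paper checks that $g$, $h$, and $j=g+h$ are each log-concave on the interval (the last because $j''<0$ there), and then invokes Lemma~\ref{lem:log_convex_concave_preserve}. The ``tempting shortcut'' you dismiss is therefore essentially what the paper does --- your objection that $g,h$ need not be log-concave is correct in full generality, but the paper's high-SNR assumption forces it (with an additional parameter check $x_{R_0}\ge(\Gamma_{R_0}/(e^n-1))^{1/n}$ for $h$). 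Ironically, your reformulation via $1/f\propto 1/g+1/h$ and closure of log-\emph{convexity} under addition is cleaner than the paper's own final step, which infers log-concavity of $gh/(g+h)$ from log-concavity of $g$, $h$, and $g+h$ via Lemma~\ref{lem:log_convex_concave_preserve} --- a deduction that, strictly, does not follow, since a quotient is involved.

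The genuine gap in your proposal is that you never prove the crux. You set up your displayed inequality, correctly identify the discriminants $A_g,A_h$ and observe they can be negative, and then say the cross-term ``has to absorb the deficit'' --- but you give no argument that it does, only a list of ingredients (the bounded ratio $x_{R_0}/(x_B-x_{R_0})$, the power constraint) that might feed such an estimate. Without the high-SNR assumption that estimate is the entire content of the theorem, and it is not routine bookkeeping. If you are willing to adopt the paper's assumption, your own $1/g+1/h$ reformulation finishes in one line (since then $g,h$ are log-concave, so $1/g,1/h$ are log-convex, and sums of log-convex functions are log-convex); if you are not, you owe the reader the actual inequality, not a plan for it.
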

\begin{proof}
Taking the logarithm of (\ref{eqn:obj_function_pncb}),
\begin{equation}
\begin{aligned}
F &= \log_e g + \log_e h - \log_e \left(g + h\right) - \log_e \left(\log_e 2\right)
\end{aligned}
\end{equation}

To prove log-concavity, we need to obtain the first and second order derivativse of $R_{BR_0}$ and $R_{R_0A}$, which are functions $g(x_{R_0})$ and $h(x_{R_0})$ respectively. The first-order derivative of $g(x_{R_0})$ can be obtained to be
\begin{equation}
g^{\prime}(x_{R_0}) = \frac{n \Gamma_B \left(x_B - x_{R_0}\right)^{-n-1}}{1 + \Gamma_B \left(x_B - x_{R_0}\right)^{-n}}
\label{eqn:gprime_1}
\end{equation}
An assumption that $\Gamma_B \left(x_B - x_{R_0}\right)^{-n} \gg 1$ is made, which is valid for the setup considered. Moreover, this assumption only serves to keep the equations simpler and does not affect the proof for log-concavity. Equation (\ref{eqn:gprime_1}) becomes (\ref{eqn:gprime_2}) and the second-order derivative is (\ref{eqn:gpprime_1}). 
\begin{equation}
g^{\prime}(x_{R_0}) = \frac{n}{x_B - x_{R_0}}
\label{eqn:gprime_2}
\end{equation}
\begin{equation}
g^{\prime \prime}(x_{R_0}) = - \frac{n}{\left(x_B - x_{R_0}\right)^2}
\label{eqn:gpprime_1}
\end{equation}
Now for $h(x_{R_0})$, the first-order derivative is
\begin{equation}
h^{\prime}(x_{R_0}) = - \frac{n \Gamma_{R_0} x_{R_0}^{-n-1}}{1 + \Gamma_{R_0} x_{R_0}^{-n}}
\label{eqn:hprime_1}
\end{equation}
A similar simplification is made as earlier by assuming that $\Gamma_{R_0} x_{R_0}^{-n} \gg 1$, which is valid for the setup considered. Equation (\ref{eqn:hprime_1}) becomes (\ref{eqn:hprime_2}) and the second-order derivative is (\ref{eqn:hpprime_1}).
\begin{equation}
h^{\prime}(x_{R_0}) = - \frac{n}{x_{R_0}}
\label{eqn:hprime_2}
\end{equation}
\begin{equation}
h^{\prime \prime}(x_{R_0}) = \frac{n}{x_{R_0}^2}
\label{eqn:hpprime_1}
\end{equation}

Using (\ref{eqn:g_2}), (\ref{eqn:gprime_2}) and (\ref{eqn:gpprime_1}), we can obtain
\begin{equation}
\begin{aligned}
g(x_{R_0}) g^{\prime \prime}(x_{R_0}) = - \frac{n \log_e \left(1 + \Gamma_B (x_B - x_{R_0})^{-n}\right)}{\left(x_B - x_{R_0}\right)^2}
\end{aligned}
\end{equation}
and
\begin{equation}
g^{\prime}(x_{R_0})^2 = \left(\frac{n}{x_B - x_{R_0}}\right)^2
\end{equation}
Since $n$, $\Gamma_B$ and $(x_B - x_{R_0})$ are positive, $g g^{\prime \prime} < 0$ and $g^{\prime} > 0$. Thus by Lemma \ref{lem:log_convex_concave}, $g$ is log-concave.

Similarly, using (\ref{eqn:h_2}), (\ref{eqn:hprime_2}) and (\ref{eqn:hpprime_1}), we can obtain
\begin{equation}
h(x_{R_0}) h^{\prime \prime}(x_{R_0}) = \frac{n \log_e \left(1 + \Gamma_{R_0} x_{R_0}^{-n} \right)}{x_{R_0}^2}
\end{equation}
and
\begin{equation}
h^{\prime}(x_{R_0})^2 = \left(\frac{n}{x_{R_0}}\right)^2
\end{equation}
The condition for log-concavity, $h h^{\prime \prime} \le \left(h^{\prime}\right)^2$, is satisfied if and only if $x_{R_0} \ge \left(\frac{\Gamma_{R_0}}{e^n - 1}\right)^{\frac{1}{n}}$. For the setup considered in this paper, $\left(\frac{\Gamma_{R_0}}{e^n - 1}\right)^{\frac{1}{n}} < d_0$ and since this is outside the domain of $f$, $h$ is also log-concave (by Lemma \ref{lem:log_convex_concave}).

Now let $j = g + h$.  In general, the sum of log-concave functions is not log-concave \cite{Boyd2004}. So we look at the first and second-order derivates of $j$ given by
\begin{equation}
\begin{aligned}
j^{\prime}(x_{R_0}) &= \frac{n \left(2x_{R_0} - x_B\right)}{x_{R_0} \left(x_B - x_{R_0}\right)}
\end{aligned}
\end{equation}
and
\begin{equation}
\begin{aligned}
j^{\prime \prime}(x_{R_0}) &= - \frac{n x_B \left(2x_{R_0} - x_B\right)}{x_{R_0}^2 \left(x_B - x_{R_0}\right)^2}
\end{aligned}
\end{equation}
Since $\textbf{dom}\ f = \left(\frac{x_B}{2},\frac{x_B}{D}\right]$, $2x_{R_0} - x_B > 0$. This means that $j^{\prime} > 0$ and $j^{\prime \prime} < 0$. Thus, $j j^{\prime \prime} \le (j^{\prime})^2$ and hence $j = g + h$ is log-concave (by Lemma \ref{lem:log_convex_concave}).

Since $g$, $h$ and $g + h$ are log-concave, then by Lemma \ref{lem:log_convex_concave_preserve}, $f$ is also log-concave.
\end{proof}
The following observations can be made from Theorem \ref{thm:f_pncb_logconcave}:
\begin{enumerate}
\item An optimisation algorithm such as gradient ascent will be able to find the global optimum relay location, $x_{R_0}^*$.
\item If there is no relay at $x_{R_0}^*$, then the next best option is to choose the one closest to the global optimum solution.
\item The boundary for placing relays for PNC-B is $\frac{r}{D}$, which will be useful for network planning.
\end{enumerate}
The gradient of $F(x_{R_0})$ can be obtained to be
\begin{equation}
\begin{aligned}
F^{\prime} &= \frac{g^{\prime}}{g} + \frac{h^{\prime}}{h} - \frac{g^{\prime} + h^{\prime}}{g + h}
\end{aligned}
\label{eqn:F_prime}
\end{equation}
In order to compute $F^{\prime}$, $\Gamma_{R_0}$ and $\Gamma_B$ have to be obtained which would require channel estimates at the base station and relay respectively. Algorithm \ref{alg:relay_selection_pncb_linear}, which is based on gradient ascent, describes the relay selection process for PNC-B. The parameter $\alpha$ is the step size and the criteria for convergence are:
\begin{enumerate}
\item $F$ at iteration $i+1$ is less than that of iteration $i$, or
\item $x_{R_0}^* > \frac{\hat{x}_B}{D}$
\end{enumerate}
\begin{algorithm}
 Initialise empty list of optimum relays\;
 \For{each user in list}{
  Estimate the user location $\hat{x}_B$ using received SNR $\frac{P_{AB}^{(r)}}{N_0W}$\;
  Initialise $x_{R_0}^* = \frac{\hat{x}_B}{2}$\; 
  \Repeat{
      convergence
    }{$x_{R_0}^* := x_{R_0}^* + \alpha  F^{\prime}$ \tcc{$F^{\prime}$ given by (\ref{eqn:F_prime})}}
  Add relay closest to $x_{R_0}^*$ to the optimum relays list\;
 }
\caption{Relay Selection (Linear Model)}
\label{alg:relay_selection_pncb_linear}
\end{algorithm}
The algorithm requires as input the transmitted powers of each of the nodes and the locations of the deployed relays, which are known a priori. Besides these two inputs, the algorithm also requires the received SNRs from each of the users in the cell in order to estimate its distance from the base station. In the algorithm, the received SNR from the user will be an estimate based on the reference or training symbol transmitted by the base station. This is typically relayed to the base station through the control channel as measurement reports. In LTE, for instance, the measurement report contains the reference signal received power (RSRP) and the reference signal received quality (RSRQ) \cite{3GPP2010a}. In order to obtain the estimate of the user location ($\hat{x}_B$), the operator could employ the Minimisation of Drive Tests (MDT) reports specified in the 3GPP LTE standards \cite{3GPP2012}. These reports contain RSRP, RSRQ and detailed location information in the form of GPS coordinates. This information can be used to train a machine learning algorithm that estimates $\hat{x}_B$. This is however beyond the scope of this paper and will be addressed in the future.

The extension of Algorithm \ref{alg:relay_selection_pncb_linear} to the planar model is straightforward. The objective function $f$ will be dependent on the coordinates $(x_{R_0},y_{R_0})$ and the gradients $\frac{\partial F}{\partial x_{R_0}}$ and $\frac{\partial F}{\partial y_{R_0}}$ have to be computed. Note that in each iteration, $x_{R_0}$ and $y_{R_0}$ have to be updated simultaneously.

\subsection{Simulation Results}
\label{sec:simulation_results}
The simulation setup is summarised in Table \ref{table:sim_setup}. Link-by-link channel coding is done in the PNC system and the achievable rate is computed assuming the use of nested lattice codes \cite{Nam2010} in the system. This rate is averaged over 1000 different network realisations.

\begin{table}[!h]
\caption{Simulation Setup}
\centering
\begin{tabular}{| l | c |}
\hline
Base Station Transmitted Power, $P_A^{(t)}$ & 46 dBm \\ \hline
Relay Transmitted Power, $P_{R_0}^{(t)}$ & 30 dBm \\ \hline
User Transmitted Power, $P_B^{(t)}$ & 23 dBm \\ \hline
Path Loss Exponent, $n$ & 3.7 \\ \hline
Cell Radius, $r$ & 1 km \\ \hline
Reference Distance, $d_0$ & 10 m \\ \hline
Carrier Frequency, $f_c$ & 1.9 GHz \\ \hline
Fading Model & Rayleigh \\ \hline
Step Size, $\alpha$ & 0.01 \\ \hline
\end{tabular}
\label{table:sim_setup}
\end{table}

Figure \ref{fig:relay_selection} shows, as an illustrative example, the achievable rates for different relays for the case of a user located at (850,750) metres, i.e. near the cell edge. Relays, represented as red pluses, are deployed with a separation distance of 200 metres. The achievable rates (in bps/Hz) for the overall system using each relay is shown against the corresponding plus symbol. It can be seen that Algorithm \ref{alg:relay_selection_pncb_linear} for the planar model, which is derived using the upper-bound approximation, chooses the optimum relay that maximises the overall achievable rate of the system.

\begin{figure}[h!]\centering\includegraphics[width=0.8\textwidth]{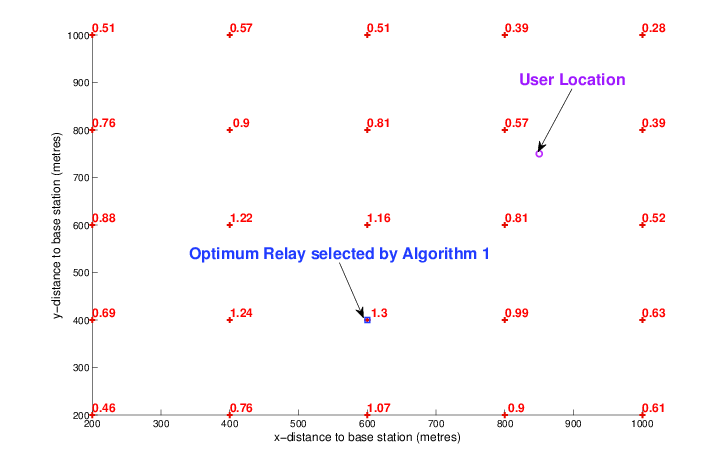}\caption{PNC-B Relay Selection in the Planar Model}\label{fig:relay_selection}\end{figure}

With the relay selection algorithm in place, we compare the rates of the proposed PNC-B scheme with that of SC-PNC \cite{Ju2010} in Figure \ref{fig:rate_comparison}. Two different network deployments are considered, a dense deployment where relays are placed every 10 metres in the cell (Figure \ref{fig:rate_pnc_dt_10}) and a sparse deployment where the relay separation is 400 metres (Figure \ref{fig:rate_pnc_dt_400}). It can be seen that PNC-B outperforms SC-PNC for all user locations. In addition, the gain of PNC-B over SC-PNC is more significant for a sparse deployment. Intuitively, this is down to the unequal time-slot allocation in PNC-B. For a dense deployment, the difference in SNRs between the two multiple-access links will be smaller than that of a sparse deployment.

\begin{figure*}[t!]
\centering
\begin{subfigure}{.5\textwidth}
  \centering
  \includegraphics[width=1.0\linewidth]{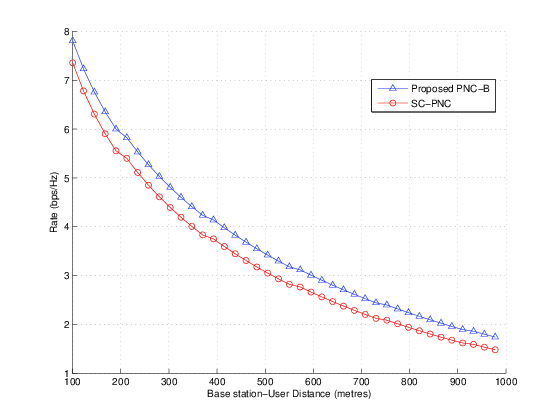}
  \caption{Relay Separation = 10m}
  \label{fig:rate_pnc_dt_10}
\end{subfigure}%
\begin{subfigure}{.5\textwidth}
  \centering
  \includegraphics[width=1.0\linewidth]{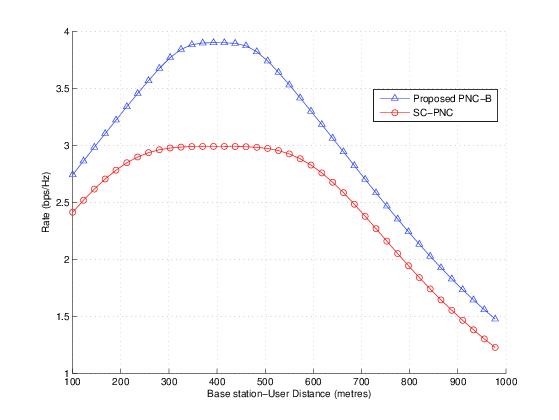}
  \caption{Relay Separation = 400m}
  \label{fig:rate_pnc_dt_400}
\end{subfigure}
\caption{Rate Performance Comparison}
\label{fig:rate_comparison}
\end{figure*}

We will now look at another performance metric, called network densification gain, as defined in \cite{Andrews2014}. The densification gain, $\rho$ in (\ref{eqn:densification_gain}), measures the effective increase in the aggregate data rate relative to the increase in base station or relay density. In (\ref{eqn:densification_gain}), if the number of relays/$\text{km}^2$ is doubled, then the network density factor = 2. 

\begin{equation}
\rho = \frac{\text{Aggregate Rate Gain}}{\text{Network Density Factor}}
\label{eqn:densification_gain}
\end{equation}

In Figure \ref{fig:densification_gains}, the densification gain and rate gain of the proposed PNC-B scheme are compared with that of SC-PNC for various network densities. 100 users were uniformly distributed in the cell and the reference network density was 10 relays/$\text{km}^2$. The y-axis on the left, in blue, represents the densification gain and the y-axis on the right, in red, represents the aggregate rate gain. It  can be observed that if the network density is doubled, PNC-B outperforms SC-PNC by about 22\%. It can also be observed that as the network density increases, we get diminishing returns in terms of the aggregate rate gain. In addition, the rate gain for SC-PNC approaches that of the proposed PNC-B scheme only at very high relay densities. For instance, when the network density is increased 10-fold, the rate is doubled for both PNC-B and SC-PNC, when compared to the reference density of 10 relays/$\text{km}^2$.

\begin{figure}[h!]\centering\includegraphics[width=0.8\textwidth]{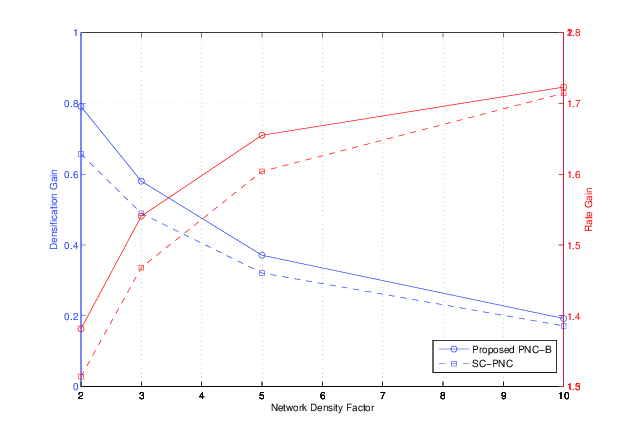}\caption{Densification and Rate Gain Comparison}\label{fig:densification_gains}\end{figure}

\section{PNC Decoding Performance with Power Imbalance}
\label{sec:pnc_decoding_accuracy}
In the previous section, a relay selection algorithm for PNC-B that maximised the overall rate achievable assuming the use of nested lattice codes was proposed. A constraint on the transmitted powers was imposed where $P_A^{(t)} > P_{R_0}^{(t)} > P_B^{(t)}$, which is typical in a cellular network environment. This constraint may however lead to an imbalance in received SNRs at the relay in the multiple-access phase. This section studies the impact of this power imbalance on the decoding performance.

The decoding performance is defined as the rate at which the superimposed signal is successfully decoded by the relay in the multiple-access phase. We perform an experimental study of the decoding performance on the universal soft radio peripheral (USRP) platform, where the conditions of a cellular network are emulated. This study serves to augment the theoretical work done in the previous section and to get us closer to implementing PNC in a practical cellular network.

\subsection{Cumulative Distribution Function of the Received SNRs}
\label{sec:cdf_rxd_snrs}
Before describing the USRP experimental setup, a simulation-based study is done to obtain the cumulative distribution function (CDF) of the received SNRs at the relay for each of the links. The CDF will help us understand the likelihood of power imbalance at the relay in the multiple-access phase. It will also serve to guide the selection of the appropriate SNRs for which experimental results will be collected from the USRP setup. The received SNR of the base station-relay link is denoted by $\Gamma_{AR_0}$ and the received SNR of the user-relay link is denoted by $\Gamma_{BR_0}$. For the setup described in Section \ref{sec:system_model}, 100 users are uniformly distributed in the cell and the optimum relay for PNC-B is selected for each of the users. Two different network deployments are considered: one in which the relays are densely deployed with a separation distance $s$ = 100 m and the other in which $s$ = 600 m. Rayleigh fading is considered and the received SNRs for 1000 different network realisations are obtained. 
\begin{figure*}[t!]
\centering
\begin{subfigure}{.5\textwidth}
  \centering
  \includegraphics[width=1.0\linewidth]{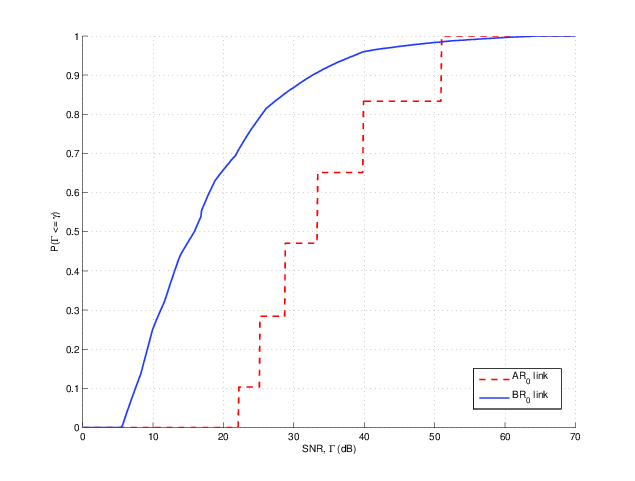}
  \caption{Relay separation, $s$ = 100m}
  \label{fig:cdf_snr_100}
\end{subfigure}%
\begin{subfigure}{.5\textwidth}
  \centering
  \includegraphics[width=1.0\linewidth]{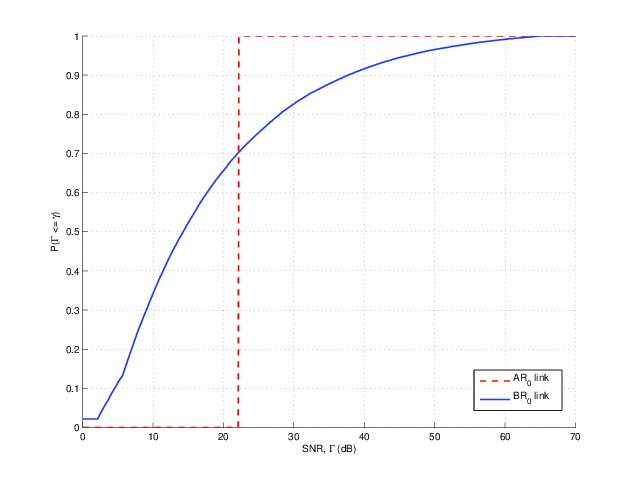}
  \caption{Relay separation, $s$ = 600m}
  \label{fig:cdf_snr_600}
\end{subfigure}
\caption{CDF of SNRs $\Gamma_{AR_0}$ and $\Gamma_{BR_0}$ for PNC-B}
\label{fig:test}
\end{figure*}

Figures \ref{fig:cdf_snr_100} and \ref{fig:cdf_snr_600} show the CDF of $\Gamma_{AR_0}$ and $\Gamma_{BR_0}$ for $s$ = 100 m and $s$ = 600 m respectively. The CDF of $\Gamma_{BR_0}$ is a smooth curve since many users are considered and their locations are randomly distributed in the cell. The CDF of $\Gamma_{AR_0}$ has steps since the location of the base station is fixed and for a given user, there are only a certain number of relays to choose from in the cell. Since there are more choices for relay selection in the dense deployment ($s$ = 100m), $\Gamma_{AR_0}$ has more steps. We define low, medium and high SNRs for a link $xy$ to be the following:
\begin{enumerate}
\item Low SNR: $\Gamma_{xy} \le $ 7.5 dB
\item Medium SNR: 7.5 dB $< \Gamma_{xy} \le$ 10 dB
\item High SNR: 10 dB $< \Gamma_{xy} \le$ 30 dB
\end{enumerate}

\begin{table}[!h]
\caption{Probability of Low, Medium and High SNRs}
\centering
\begin{tabular}{| c || c | c |}
\hline
\multirow{2}{*}{\textbf{CDF}} & \multicolumn{2}{ c |}{\textbf{Relay Separation}} \\ \cline{2-3} 
& \textbf{100 m} & \textbf{600 m} \\ \hline \hline
P($\Gamma_{\text{AR}_0} <= 7.5\ \text{dB}$) & 0 & 0 \\ \hline
P($\Gamma_{\text{BR}_0} <= 7.5\ \text{dB}$) & 0.1 & 0.23 \\ \hline \hline
P($7.5\ \text{dB} < \Gamma_{\text{AR}_0} <= 10\ \text{dB}$) & 0 & 0 \\ \hline
P($7.5\ \text{dB} < \Gamma_{\text{BR}_0} <= 10\ \text{dB}$) & 0.15 & 0.11 \\ \hline \hline
P($10\ \text{dB} < \Gamma_{\text{AR}_0} <= 30\ \text{dB}$) & 0.47 & 1 \\ \hline
P($10\ \text{dB} < \Gamma_{\text{BR}_0} <= 30\ \text{dB}$) & 0.62 & 0.48 \\ \hline
\end{tabular}
\label{table:cdf_snr}
\end{table}
The probabilities for the low, medium and high SNRs for the $AR_0$ and $BR_0$ links are summarised in Table \ref{table:cdf_snr}. It can be seen that for the two network deployments, the probability that $\Gamma_{AR_0}$ is low or medium is 0. This is because the base station is transmitting at very high power and in the worst-case the relay is only 600 metres away, which is within the boundary derived in Section \ref{sec:pncb_algorithm}. On the other hand, since the mobile is transmitting at low power, it is quite likely that $\Gamma_{BR_0}$ is low (10\% for $s$ = 100m and 23\% for $s$ = 600m). As expected, the likelihood of a low $\Gamma_{BR_0}$ is greater for a network with fewer relays.

\subsection{Experimental Setup}
\label{sec:exp_setup}
\begin{figure}[h!]\centering\includegraphics[width=0.7\textwidth]{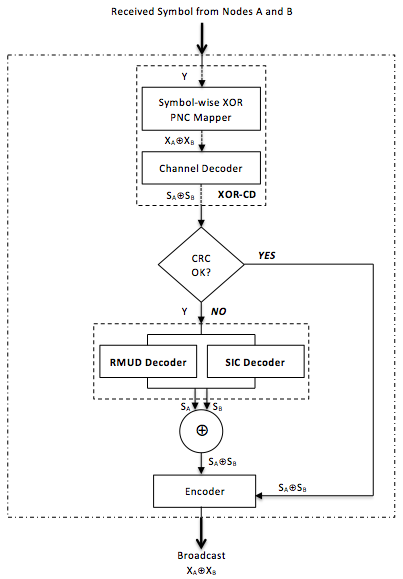}\caption{Decoder Architecture at the Relay}\label{fig:relay_architecture}\end{figure}

We use the implementation of PNC on the USRP platform, detailed in \cite{Lu2013a}, to analyse the decoding performance for various SNRs. The physical layer is based on OFDM and the cyclic prefix (CP) is used to resolve symbol asynchrony and prevent inter-symbol interference. The details of this and the frame format can be found in \cite{Lu2013a}. The modulation and coding schemes used by the end-nodes (A and B) are BPSK and convolutional coding as defined in the LTE standard \cite[Chapter 10]{Sesia2009}. Since the decode-and-forward strategy is considered in this paper, link-by-link channel-coded PNC is implemented where the relay decodes the superimposed channel coded symbols from $A$ and $B$ and re-encodes it before broadcasting. Let the source symbols from nodes $A$ and $B$ be $S_A$ and $S_B$ respectively. After channel coding, the symbols $X_A$ and $X_B$ are transmitted in the multiple-access phase. The relay receives the superimposed coded symbols from $A$ and $B$ corrupted by noise. It then tries to decode $S_A \oplus S_B$ before re-encoding it for the broadcast phase. The XOR-Channel Decoder (XOR-CD) \cite{Lu2013a} is used at the relay. In XOR-CD, an XOR mapping of the received symbol $Y$ to the transmitted network-coded symbol $X_A \oplus X_B$ is first performed, followed by channel decoding to obtain $S_A \oplus S_B$. The channel decoder at the relay implements the Viterbi algorithm \cite{Lu2013a}. Error checking is also performed to verify if the decoding was successful. The IEEE CRC-32 (cyclic redundancy check) function is modified in the implementation so that CRC($S_A \oplus S_B$) = CRC($S_A$) $\oplus$ CRC($S_B$) \cite{Lu2013b}.

If the channel decoding is unsuccessful (i.e. CRC check fails), two additional decoders are used to decode the individual source symbols $S_A$ and $S_B$. The first decoder is based on reduced-constellation multi-user detection (RMUD) and the second is based on successful interference cancellation (SIC). In RMUD, the number of constellation points to consider for decoding is reduced by adopting the log-max approximation. In BPSK, for instance, the four possible constellation points ($\pm 1,\pm 1$) are reduced to two, the details of which can be found in \cite{Lu2013}. In SIC, the stronger signal is first decoded and the estimate of that is subtracted from the received signal to then decode the weaker one. If decoding is successful, both decoders would output $S_A$ and $S_B$ and they are then combined to form $S_A \oplus S_B$. Figure \ref{fig:relay_architecture} gives an overview of the decoder architecture at the relay. 

\subsection{Decoding Performance}
\label{sec:decoding_perf}
Figure \ref{fig:pnc_decoding_accuracy} shows the decoding performance of XOR-CD alone and also the combined XOR-CD, RMUD and SIC decoders. The x-axis shows the received SNRs of the two end nodes at the relay. This is represented as an ordered tuple of the form $\left(\Gamma_{BR_0},\Gamma_{AR_0}\right)$ in dBs. It can be observed from the graph that at low SNRs, the decoding performance of XOR-CD is very low. It can also be observed that the decoding success rate improves when the received SNRs are imbalanced. For instance, for the (7,7.5) dB SNR pair, the success rate of XOR-CD is only about 10\% and the success rate improves to about 33\% for the (7,9) dB SNR pair. Similarly, the decoding performance of (7,9.5) dB is significantly greater than (7.5,7.5) dB. Since all the nodes are transmitting at maximum power, power control to balance the SNRs could be detrimental to the decoding performance, especially at low SNRs. At medium to high SNRs, the decoding performance of XOR-CD is between 88-95\%. The success rate of XOR-CD could be further improved by using more advanced coding techniques.
\begin{figure}[h!]\centering\includegraphics[width=0.85\textwidth]{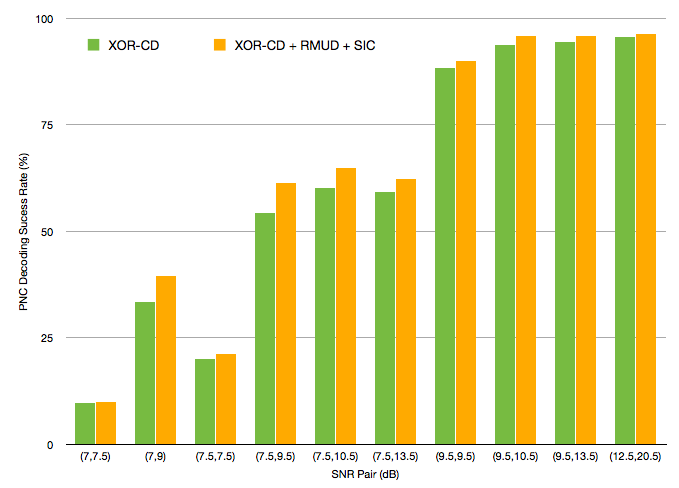}\caption{PNC Decoding Performance at Relay}\label{fig:pnc_decoding_accuracy}\end{figure}

The gain of using RMUD and SIC decoders over XOR-CD can be observed to be greater at low and medium SNRs and not so significant at high SNRs. This gain is quantified in Table \ref{table:rmud_sic}. We will look at the fourth row in the table to illustrate how the data can be analysed. For the SNR pair (7.5,9.5) dB, the gain provided by the RMUD and SIC decoders over XOR-CD is 6.9\%. The contributions of RMUD and SIC toward this 6.9\% gain are 10\% and 97\% respectively. The final column shows that both RMUD and SIC get 7\% in common for successfully decoding the individual source symbols $S_A$ and $S_B$.

We can now observe from Table \ref{table:rmud_sic} that at low and balanced SNRs, the success rate gain from RMUD and SIC is very low (between 0.1 and 1.2\%). When the SNRs are imbalanced, the contribution of RMUD and SIC is greater (between 4.6 and 6.9\%), especially when the SNR from one of the nodes is low. It can also be observed that the contribution of SIC is significantly greater when there is an imbalance in SNRs. This is expected since SIC is designed to differentiate and decode the strong and weak signals. 

\begin{table}[!h]
\caption{Contribution of RMUD and SIC Decoders}
\centering
\begin{tabular}{| c || c || c | c | c |}
\hline
\multirow{2}{*}{\textbf{SNR Pair (dB)}} & \textbf{Success Rate Gain from} & \multicolumn{3}{ c |}{\textbf{Contribution towards Gain}} \\ \cline{3-5}
 & \textbf{RMUD + SIC (\%)} & \textbf{RMUD (\%)} & \textbf{SIC (\%)} & \textbf{RMUD $\cap$ SIC (\%)} \\ \hline \hline
(7,7.5) & 0.1 & 100 & 0 & 0 \\ \hline
(7,9) & 6.1 & 0 & 100 & 0 \\ \hline
(7.5,7.5) & 1.2 & 25 & 75 & 0 \\ \hline
(7.5,9.5) & 6.9 & 10 & 97 & 7 \\ \hline
(7.5,10.5) & 4.6 & 9 & 96 & 5 \\ \hline
(7.5,13.5) & 3.1 & 15 & 87 & 2 \\ \hline
(9.5,9.5) & 1.7 & 30 & 90 & 20 \\ \hline
(9.5,10.5) & 2.2 & 47 & 76 & 23 \\ \hline
(9.5,13.5) & 1.5 & 27 & 91 & 18 \\ \hline
(12.5,20.5) & 0.8 & 27 & 86 & 13 \\ \hline
\end{tabular}
\label{table:rmud_sic}
\end{table} 

These experimental results have practical implications and can be used by the network operator to make a tradeoff between cost and complexity. If cost in terms of relay deployment is an issue, then there will be fewer relays in the network resulting in the likelihood of a low $\Gamma_{BR_0}$ to be higher. Then, the highly complex relay with XOR-CD+RMUD+SIC decoders have to be deployed to improve the decoding performance. If on the other hand complexity is an issue and the relay consists of XOR-CD only, then more relays have to be deployed to reduce the likelihood of a low $\Gamma_{BR_0}$.

\section{Conclusions and Future Work}
\label{sec:conclusions}
This paper applies PNC in a heterogeneous cellular network in a single cell. For bidirectional traffic between the base station and the user, a relay-selection algorithm is proposed where uneven time allocations for the multiple-access phase and the broadcast phase are adopted to maximize the overall achievable data exchange rate. The optimisation problem is shown to be log-concave and relay selection is based on the gradient-ascent algorithm. Compared to the widely applied selection-cooperation technique, the proposed algorithm performs significantly better for all user locations and network deployments. 

The decoding performance of PNC with imbalanced SNRs is then studied on the software radio platform. For the setup considered, the experimental results show that the decoding success rate improves when the received SNRs are imbalanced with channel coding. Since all the nodes are transmitting at maximum power, the results show that any power control to balance the SNRs could be detrimental to the decoding performance, especially at low SNRs. Two additional decoders based on multiuser detection and successive interference cancellation are also studied and when combined with the channel decoder, it is shown to improve the decoding performance at low and medium SNRs.

In the future, this study will be extended to a multi-cell setting with higher-order modulations and advanced coding techniques. The problem of resource allocation in the presence of fading will also be considered. The studies in this paper assumed an accurate estimate of the user location based on the received SNR. An accuracy study of these estimates based on real network data will be undertaken in the future as well.

\section*{Acknowledgement}
The authors would like to acknowledge Toshiba Research Europe Ltd. (TREL) and the U.K. Research Council for supporting the work done in this paper through the Dorothy Hodgkin Postgraduate Award. The authors would also like to thank the Worldwide Universities Network (WUN) Research Mobility Programme (RMP) for enabling the collaboration between the University of Bristol and The Chinese University of Hong Kong. This work is also partially supported by the General Research Funds (Project No. 414812) and AoE grant E-02/08, established under the University Grant Committee of the Hong Kong Special Administrative Region, China. 

\bibliographystyle{IEEEtran}
\bibliography{library}

\end{document}